\documentclass[12pt]{article}

\usepackage[a4paper,margin=1in]{geometry}
\usepackage{amsmath,amssymb,amsthm}
\usepackage{mathtools}
\usepackage{authblk}
\usepackage{hyperref}
\usepackage{enumitem}
\usepackage{setspace}
\usepackage{natbib}
\usepackage{tikz}
\usetikzlibrary{positioning,arrows.meta}
\usepackage{float}
\usepackage{caption}
\DeclareMathOperator*{\argmax}{arg\,max}
\newtheorem{theorem}{Theorem}

\newtheorem{proposition}{Proposition}

\newtheorem{definition}{Definition}
\newtheorem{assumption}{Assumption}

\title{\textbf{Meta-Bayesian Nash Equilibrium: Existence via Kakutani's Fixed Point Theorem}}

\author[1,*]{Madjid Eshaghi Gordji}
\author[2]{Esmaiel Abounoori}
\author[1]{Mohamadali Berahman}

\affil[1]{Faculty of Mathematics, Statistics and Computer Science, Semnan University, Semnan, Iran}
\affil[2]{Faculty of Economics, Semnan University, Semnan, Iran}

\affil[*]{Corresponding author: \href{mailto:meshaghi@semnan.ac.ir}{meshaghi@semnan.ac.ir}}

\date{}

\begin{document}

\maketitle

\noindent
\textbf{Email addresses:}\\
Madjid Eshaghi Gordji: \href{mailto:meshaghi@semnan.ac.ir}{meshaghi@semnan.ac.ir}\\
Esmaiel Abounoori: \href{mailto:esmaiel.abounoori@gmail.com}{esmaiel.abounoori@gmail.com}\\
Mohamadali Berahman: \href{mailto:mohamadali\_berahman@semnan.ac.ir}{mohamadali\_berahman@semnan.ac.ir}

\vspace{0.5cm}

\vspace{0.5cm}

\vspace{0.5cm}

\begin{abstract}
We extend the concept of meta-Nash equilibrium, recently introduced by \citet{EshaghiBagha2026} for complete-information games, to settings with incomplete information. We define a \emph{meta-Bayesian Nash equilibrium} as a profile of type-dependent mixed meta-actions and an environmental move such that no player type can profitably deviate and the environment cannot improve its mixed move. For each transformed game, meta-payoffs are defined by the unique Bayesian Nash equilibrium of that game. Using Kakutani's fixed-point theorem, we prove existence under finiteness of types, meta-actions, and transformations, together with the assumption that each transformed game admits a unique Bayesian Nash equilibrium. Three illustrative examples --- adaptive subsidies with private costs, cybersecurity protocol selection, and platform rule formation --- show that private information at the meta-level is essential and that our framework subsumes both classical Bayesian games and complete-information meta-games.
\end{abstract}

\noindent\textbf{JEL Classification:} C72, D82, D02

\noindent\textbf{Keywords:} Meta-Bayesian Nash equilibrium, status-quo inertia, game transformation, incomplete information, equilibrium selection

\section{Model}

For each transformed game,
\[
T(G_0) = \bigl(N, (A_i), (\Theta_i), p, (u_i^T)\bigr),
\]
where each
\[
u_i^T:A\times\Theta\to\mathbb{R}
\]
is continuous in actions. This restriction is purely technical: it keeps the domain of the fixed-point argument constant; more general transformations, e.g., action deletions, are discussed in the conclusion.

\subsection{Meta-Actions and Environment}

Each player \(i\) has a finite set of meta-actions \(X_i\). The environment is modelled as an additional player, player \(0\), with a finite set of moves \(E\). The environment's payoff function is
\[
W:X\times E\times\Theta\to\mathbb{R},
\]
where
\[
X=\prod_i X_i.
\]
Because all sets are finite, no continuity assumptions are needed.

A mixed meta-strategy for player \(i\) is a type-dependent distribution:
\[
m_i:\Theta_i \to \Delta(X_i).
\]
The environment's mixed meta-strategy is
\[
m_0\in\Delta(E),
\]
where we assume the environment has no private information for simplicity; an extension is straightforward.

Define
\[
M_i = \prod_{\theta_i\in\Theta_i} \Delta(X_i),
\]
the set of all mixed meta-strategies for player \(i\). Each \(M_i\) is a compact convex set, as it is a product of simplices. The space of all mixed meta-strategies is
\[
M = \left(\prod_{i=1}^n M_i\right) \times \Delta(E),
\]
which is also compact and convex.

\subsection{Transformation Rule and Induced Probabilities}

A transformation rule is a function
\[
\Psi: X\times E\times\Theta \to \mathcal{T}.
\]
For each pure meta-action profile \((x,e)\) and type profile \(\theta\), \(\Psi\) determines which transformation is implemented. When players use mixed meta-strategies \(m\), the probability of obtaining a particular transformation \(T\) given \(\theta\) is
\[
\pi(T\mid m,\theta)
:=
\sum_{(x,e)\in X\times E}
\mathbf{1}_{\{\Psi(x,e,\theta)=T\}}
\left(\prod_{i=1}^n m_i(x_i\mid\theta_i)\right)m_0(e).
\]
Because \(X_i\) and \(E\) are finite, each \(\pi(T\mid m,\theta)\) is a multilinear, hence continuous, function of \(m\).

\subsection{Unique Within-Game Equilibrium}

\begin{assumption}\label{ass:unique}
For every transformation \(T\in\mathcal{T}\), the transformed game \(T(G_0)\) has a unique Bayesian Nash equilibrium, denoted by
\[
\sigma(T)=(\sigma_1(T),\dots,\sigma_n(T)).
\]
\end{assumption}

This assumption allows us to define meta-payoffs without an exogenous selection rule. For each \(T\) and each type profile \(\theta\), define the ex-post equilibrium payoff:
\[
\bar u_i(T,\theta)
:=
\mathbb{E}_{a\sim\sigma(T)(\theta)} u_i^T(a,\theta).
\]
The interim meta-payoff of player \(i\) of type \(\theta_i\) under meta-strategy profile \(m\) is
\[
V_i(m\mid\theta_i)
=
\sum_{\theta_{-i}} p(\theta_{-i}\mid\theta_i)
\sum_{T\in\mathcal{T}}
\pi(T\mid m,\theta)\,\bar u_i(T,\theta).
\]
The environment's expected payoff is
\[
V_0(m)
=
\sum_{\theta\in\Theta} p(\theta)
\sum_{(x,e)\in X\times E}
\left(\prod_{i=1}^n m_i(x_i\mid\theta_i)\right)
m_0(e)\,W(x,e,\theta).
\]

\section{Meta-Bayesian Nash Equilibrium}
\label{sec:equilibrium}

\begin{definition}[Meta-Bayesian Nash equilibrium]
A profile
\[
m^* = (m_1^*,\dots,m_n^*,m_0^*)\in M
\]
is a \emph{meta-Bayesian Nash equilibrium} if for every player \(i\), every type \(\theta_i\in\Theta_i\), and every \(\mu\in\Delta(X_i)\),
\[
V_i(m^*\mid\theta_i)
\ge
V_i\bigl((\mu \text{ at } \theta_i,\; m_i^*(\cdot\mid\theta_i') \text{ for } \theta_i'\neq\theta_i,\; m_{-i}^*,m_0^*)\mid\theta_i\bigr),
\]
and for the environment,
\[
V_0(m^*)
\ge
V_0(m_1^*,\dots,m_n^*,m_0)
\qquad
\forall m_0\in\Delta(E).
\]
\end{definition}

In words, no player type can profitably deviate by changing his or her meta-action only at that type, and the environment cannot profitably change its mixed move.

\section{Existence Theorem and Proof}
\label{sec:existence}

\begin{theorem}[Existence of Meta-Bayesian Nash Equilibrium]
Assume:
\begin{enumerate}
\item For each player \(i\), the type space \(\Theta_i\) and the meta-action set \(X_i\) are finite; the environment move set \(E\) and the transformation set \(\mathcal{T}\) are finite.
\item Assumption~\ref{ass:unique} holds.
\end{enumerate}
Then a meta-Bayesian Nash equilibrium exists.
\end{theorem}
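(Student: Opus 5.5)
The plan is to recast the meta-level interaction as a finite normal-form game whose players are the \emph{agents} $(i,\theta_i)$ for $i=1,\dots,n$ and $\theta_i\in\Theta_i$, together with the environment (player $0$). Agent $(i,\theta_i)$ has strategy set $\Delta(X_i)$ and the environment has $\Delta(E)$, so the joint strategy space is exactly $M$. We then apply Kakutani's fixed-point theorem to the joint best-response correspondence on $M$. By Assumption~\ref{ass:unique}, each $\bar u_i(T,\theta)$ is a well-defined real number, with no selection rule needed. Since $\mathcal{T}$ and $\Theta$ are finite, $V_i(\cdot\mid\theta_i)$ is a finite sum of products of the multilinear functions $\pi(T\mid m,\theta)$ with constants, so it is continuous on $M$. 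Likewise $V_0$ is a finite multilinear polynomial in the coordinates of $m$, hence continuous.

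The key structural step is linearity in the deviating agent's own coordinate. Fix $i$ and $\theta_i$. The only factor in $V_i(m\mid\theta_i)$ involving $m_i(\cdot\mid\theta_i)$ is $m_i(x_i\mid\theta_i)$ inside each $\pi(T\mid m,\theta)$, where $\theta$ has $i$-th coordinate equal to $\theta_i$. Coordinates $m_i(\cdot\mid\theta_i')$ with $\theta_i'\neq\theta_i$ never appear, because $\pi$ is evaluated only at type profiles whose $i$-th entry is $\theta_i$. Hence
\[
V_i(m\mid\theta_i)=\sum_{x_i\in X_i} m_i(x_i\mid\theta_i)\,g_{i,\theta_i}(x_i;m_{-i},m_0),
\]
with $g_{i,\theta_i}$ continuous, so $V_i$ is affine in $m_i(\cdot\mid\theta_i)$. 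Similarly, $V_0$ is affine in $m_0$.

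Next I define the best-response correspondences
\[
B_{i,\theta_i}(m)=\argmax_{\mu\in\Delta(X_i)} V_i\bigl((\mu\text{ at }\theta_i,\ m_{-(i,\theta_i)})\mid\theta_i\bigr),\qquad
B_0(m)=\argmax_{m_0'\in\Delta(E)} V_0(m_{-0},m_0'),
\]
and set $B(m)=\prod_{i,\theta_i}B_{i,\theta_i}(m)\times B_0(m)$. Each factor is nonempty, since a continuous function attains its maximum on a compact simplex. Each factor is convex, since the argmax of an affine function over a simplex is a face, i.e.\ the convex hull of the maximizing pure actions. Thus $B$ is nonempty- and convex-valued on the nonempty compact convex set $M\subset\mathbb{R}^k$.

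The step I expect to require the most care is the closed graph property. I would derive it from Berge's maximum theorem: the objective is jointly continuous in $(\mu,m)$ and the constraint set is constant and compact, so each argmax correspondence is upper hemicontinuous with compact values, and hence has closed graph. A finite product of closed-graph correspondences has closed graph. Alternatively, one can argue directly with a sequence $m^k\to m$, $b^k\in B(m^k)$, $b^k\to b$, passing the optimality inequalities to the limit by continuity.

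Kakutani's theorem then yields $m^*\in B(m^*)$. By construction, this is exactly the meta-Bayesian Nash equilibrium condition. Each type-$\theta_i$ inequality in the definition compares $V_i(m^*\mid\theta_i)$ against deviations only at $\theta_i$, which is precisely agent $(i,\theta_i)$'s best-response condition. The environment condition is $m_0^*\in B_0(m^*)$.
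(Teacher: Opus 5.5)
Your proposal is correct and follows essentially the same route as the paper. Both apply Kakutani's theorem on $M$ to the product of the type-by-type best-reply correspondences $B_{i,\theta_i}$ and the environment's $B_0$, getting continuity and own-coordinate affinity from the multilinearity of $\pi(T\mid m,\theta)$ and upper hemicontinuity from Berge's maximum theorem. Your extra remarks (that $m_i(\cdot\mid\theta_i')$ for $\theta_i'\neq\theta_i$ does not enter $V_i(\cdot\mid\theta_i)$, and that upper hemicontinuity with compact values gives the closed graph Kakutani needs) make explicit steps the paper leaves implicit.
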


\section{Introduction}

Since \citet{Nash1951}, non-cooperative game theory has traditionally assumed that the underlying game is fixed. However, many real-world strategic situations involve a deeper layer of interaction in which the game itself may be modified through strategic behaviour, institutional intervention, or environmental forces. Early insights into this broader perspective appear in the work of \citet{Schelling1960}, who emphasised focal points and coordination mechanisms, while \citet{HarsanyiSelten1988} developed a systematic theory of equilibrium selection. In institutional economics, \citet{North1990} analysed the persistence and evolution of institutional structures over time.
Related work on the emergence and stability of conventions in strategic environments was developed by \citet{Young1993}.
More recently, \citet{EshaghiBagha2026} introduced the concept of \emph{meta-Nash equilibrium} together with the notion of status-quo inertia in deterministic environments. Relatedly, \citet{EshaghiAbounooriBerahman2026} studied equilibrium persistence and institutional transition under status-quo inertia, emphasizing how endogenous modifications of the strategic environment may be necessary to overcome equilibrium lock-in and institutional persistence.

The present paper extends that framework to incomplete-information settings in which agents possess private information that affects not only strategic actions within a game, but also preferences regarding which game should ultimately be played. In this setting, strategic interaction takes place simultaneously at two levels: players choose actions inside a game, while also participating in a higher-order process that determines how the game itself is transformed.

To focus specifically on endogenous game selection rather than equilibrium selection inside transformed games, we impose a uniqueness assumption on the Bayesian Nash equilibrium associated with each transformed game. We define a \emph{meta-Bayesian Nash equilibrium} in which players first choose type-dependent mixed meta-actions and an environmental move is selected. Together, these determine a transformation
\[
\Psi
\]
of the base Bayesian game. Each transformed game admits a unique Bayesian Nash equilibrium, and meta-payoffs are defined as the expected equilibrium payoffs generated within the transformed game.

Using Kakutani's fixed-point theorem \citep{Kakutani1941}, we prove the existence of meta-Bayesian Nash equilibrium under finiteness assumptions on type spaces, meta-actions, and transformations.Our existence argument also relies on continuity and upper hemicontinuity properties of correspondences in finite-dimensional topological spaces; see \citet{Berge1963} for related foundational results. The proposed framework contains several important special cases. When the transformation set is a singleton, the model reduces to the classical Bayesian Nash equilibrium framework studied in standard game theory \citep{FudenbergTirole1991}. When type spaces are singletons, the model reduces to the deterministic meta-Nash equilibrium framework introduced by \citet{EshaghiBagha2026}.

Three illustrative examples --- adaptive subsidies with private costs, cybersecurity protocol selection, and platform rule formation --- demonstrate that private information at the meta-level is essential for understanding endogenous institutional and strategic transformations.

The paper is organised as follows. Section~\ref{sec:framework} introduces the general framework. Section~\ref{sec:equilibrium} defines meta-Bayesian Nash equilibrium. Section~\ref{sec:existence} establishes the existence theorem. Section~\ref{sec:examples} presents illustrative examples, and Section~\ref{sec:conclusion} concludes.

\section{Framework}
\label{sec:framework}

\subsection{Base Bayesian Game}

A Bayesian game is a tuple
\[
G_0 =
\bigl(
N,
(A_i)_{i\in N},
(\Theta_i)_{i\in N},
p,
(u_i)_{i\in N}
\bigr),
\]
where
\[
N=\{1,\dots,n\}
\]
is a finite set of players, \(A_i\) is either a compact convex subset of a Euclidean space or a finite action set (in the latter case we work with mixed strategies), \(\Theta_i\) is a finite type space for player \(i\), and
\[
p:\prod_j \Theta_j \to [0,1]
\]
is a common prior distribution over type profiles.

For each player \(i\), the payoff function is
\[
u_i:A\times\Theta\to\mathbb{R},
\]
where
\[
A=\prod_i A_i,
\qquad
\Theta=\prod_i\Theta_i.
\]
We assume that each payoff function \(u_i\) is continuous in actions.

A behavioural strategy for player \(i\) is a mapping
\[
\sigma_i:\Theta_i\to\Delta(A_i),
\]
where \(\Delta(A_i)\) denotes the set of probability distributions over \(A_i\). Since each type space is finite, measurability issues are trivial.

Given a strategy profile
\[
\sigma=(\sigma_1,\dots,\sigma_n),
\]
the interim expected payoff of player \(i\) with type \(\theta_i\) is
\[
U_i(\sigma \mid \theta_i)
=
\mathbb{E}_{\theta_{-i}\sim p(\cdot\mid\theta_i)}
\left[
\mathbb{E}_{a\sim\sigma(\theta)}
u_i(a,\theta)
\right].
\]

A Bayesian Nash equilibrium is a strategy profile
\[
\sigma^*
\]
such that for every player \(i\), every type
\[
\theta_i\in\Theta_i,
\]
and every alternative behavioural strategy
\[
\sigma_i':\Theta_i\to\Delta(A_i),
\]
we have
\[
U_i(\sigma^*\mid\theta_i)
\ge
U_i\bigl((\sigma_i',\sigma_{-i}^*)\mid\theta_i\bigr).
\]

\subsection{Transformations}

Let \(\mathcal{T}\) be a finite set of transformation operators. Following \citet{EshaghiBagha2026}, a transformation
\[
T\in\mathcal{T}
\]
maps a Bayesian game to another Bayesian game. For the existence theorem, we restrict attention to transformations that leave the sets of players, action sets, type spaces, and the prior unchanged, and only modify the payoff functions.

\begin{proof}

We apply Kakutani's fixed-point theorem \citep{Kakutani1941} to the space
\[
M.
\]
The set \(M\) is a compact convex subset of a finite-dimensional Euclidean space because each \(\Delta(X_i)\) and \(\Delta(E)\) is a simplex.

\medskip

\noindent\textbf{Lemma.}

\begin{itemize}
\item
For each player \(i\) and each type \(\theta_i\), the meta-payoff
\[
V_i(\cdot\mid\theta_i)
\]
is continuous on \(M\).

\item
For fixed \((m_{-i},m_0)\) and fixed
\[
m_i(\cdot\mid\theta_i')
\]
for
\[
\theta_i'\neq\theta_i,
\]
the map
\[
\mu
\mapsto
V_i\bigl(
(\mu \text{ at } \theta_i,\;
m_i(\cdot\mid\theta_i')
\text{ for }
\theta_i'\neq\theta_i,\;
m_{-i},m_0)
\mid\theta_i
\bigr)
\]
is affine in
\[
\mu\in\Delta(X_i).
\]

\item
For fixed \((m_1,\dots,m_n)\), the map
\[
\nu\mapsto V_0(m_1,\dots,m_n,\nu)
\]
is affine in
\[
\nu\in\Delta(E),
\]
and
\[
V_0
\]
is continuous on \(M\).
\end{itemize}

\medskip

\noindent\textit{Proof of Lemma.}

Because \(X_i\), \(E\), and \(\Theta\) are finite,
\[
\pi(T\mid m,\theta)
\]
is a multilinear polynomial in the mixed meta-strategies. The expectations over \(\theta_{-i}\) and \(\theta\) are linear operations, while
\[
\bar u_i(T,\theta)
\]
and
\[
W(x,e,\theta)
\]
are constants. Hence each \(V_i\) and \(V_0\) is a finite linear combination of multilinear functions, which implies continuity and the stated affinity properties.
\hfill\(\square\)

\medskip

For each player \(i\) and each type \(\theta_i\), define the interim best-reply correspondence. Holding fixed player \(i\)'s mixed actions at all types other than \(\theta_i\), together with \(m_{-i}\) and \(m_0\), define
\[
B_{i,\theta_i}(m)
=
\argmax_{\mu\in\Delta(X_i)}
V_i\bigl(
(\mu \text{ at } \theta_i,\;
m_i(\cdot\mid\theta_i')
\text{ for }
\theta_i'\neq\theta_i,\;
m_{-i},m_0)
\mid\theta_i
\bigr).
\]

Since \(V_i\) is affine in \(\mu\) for fixed remaining components, the maximum is attained and the set of maximisers is convex. By Berge's Maximum Theorem, using continuity of the objective function and compactness of the feasible set,
\[
B_{i,\theta_i}
\]
is upper hemicontinuous in \(m\).

For the environment, define
\[
B_0(m_1,\dots,m_n)
=
\argmax_{\nu\in\Delta(E)}
V_0(m_1,\dots,m_n,\nu).
\]

Affinity in \(\nu\) together with continuity in \(m\) implies that
\[
B_0
\]
is non-empty, convex-valued, and upper hemicontinuous.

Now define the product correspondence
\[
B:M\rightrightarrows M
\]
by
\[
B(m)
=
\left(
\prod_{i,\theta_i}
B_{i,\theta_i}(m)
\right)
\times
B_0(m_1,\dots,m_n).
\]

For every
\[
m\in M,
\]
the set
\[
B(m)
\]
is non-empty and convex. Since each component correspondence is upper hemicontinuous, the product correspondence \(B\) is also upper hemicontinuous.

Therefore, Kakutani's fixed-point theorem yields a point
\[
m^*\in M
\]
such that
\[
m^*\in B(m^*).
\]

By construction, \(m^*\) satisfies the no-deviation conditions for all players, all types, and the environment. Hence,
\[
m^*
\]
is a meta-Bayesian Nash equilibrium.
\hfill\(\square\)

\end{proof}

\begin{figure}[htbp]
\centering

\begin{tikzpicture}[
    node distance=1.5cm,
    auto,
    >=stealth,
    block/.style={
        rectangle,
        draw,
        text width=3cm,
        align=center,
        minimum height=1cm
    },
    arrow/.style={
        ->,
        thick
    }
]

\node[block] (meta)
{Meta-strategies\\
$m_i(\cdot\mid\theta_i)$};

\node[block, below left=1.2cm and -0.5cm of meta] (env)
{Environment\\
$m_0$};

\node[block, below right=1.2cm and -0.5cm of meta] (trans)
{Rule $\Psi$};

\node[block, below=2cm of trans] (game)
{$T(G_0)$};

\node[block, below=1.5cm of game] (equil)
{Unique BNE\\
$\sigma(T)$};

\node[block, left=2cm of equil] (payoff)
{$V_i(m\mid\theta_i)$};

\draw[arrow] (meta) -- (trans);
\draw[arrow] (env) -- (trans);
\draw[arrow] (trans) -- (game);
\draw[arrow] (game) -- (equil);
\draw[arrow] (equil) -- (payoff);

\draw[arrow]
(payoff)
to[out=135,in=225]
(meta);

\end{tikzpicture}

\caption{
Structure of the meta-Bayesian game.
Players select type-dependent mixed meta-actions and the environment selects a mixed move.
The transformation rule $\Psi$ produces a distribution over transformations $T$.
The transformed game is played, and its unique Bayesian Nash equilibrium determines meta-payoffs.
}

\label{fig:metagame}

\end{figure}

\section{Illustrative Examples}
\label{sec:examples}

Three illustrative examples demonstrate both the necessity of the framework and its applicability to contemporary economic and policy problems.

\subsection{Adaptive Subsidy Competition}

Consider two firms competing in a market where production costs are privately known. Before market competition takes place, firms can support different subsidy mechanisms through meta-actions. The government, acting as the environment, selects a subsidy regime depending on the meta-action profile and partially observed economic conditions.

Each subsidy regime transforms the payoff structure of the underlying Bayesian game. Firms therefore strategically influence not only market behaviour, but also the institutional environment in which competition occurs. The resulting equilibrium depends simultaneously on private information and endogenous game transformation.

\subsection{Cybersecurity Protocol Selection}

Consider several firms connected through a shared digital infrastructure. Each firm privately knows its vulnerability level and operational costs. Before operational interaction occurs, firms select meta-actions corresponding to preferred cybersecurity standards or protocols.

An external regulator or coordinating authority determines which protocol becomes mandatory. Different protocols induce different transformed Bayesian games because security costs and exposure risks change across protocols. The framework captures the interaction between private information, strategic institutional choice, and equilibrium behaviour under uncertainty.

\subsection{Platform Rule Formation}

Consider a digital platform in which users and moderators possess private information regarding preferences, moderation tolerance, or reputational concerns. Before participation in the platform economy, agents strategically support alternative moderation or governance rules.

The platform environment selects a governance structure according to the aggregate meta-actions. Different governance rules generate different transformed games affecting participation incentives, visibility, and strategic interaction among users. In this setting, endogenous rule formation becomes part of the strategic equilibrium itself.

\subsection{Adaptive Subsidy Competition with Private Costs}

Two firms compete in prices in a differentiated product market. Each firm has a private cost type
\[
\theta_i\in\{L,H\},
\qquad
0<c_L<c_H,
\]
with independent prior
\[
\Pr(\theta_i=H)=q.
\]
The within-game action set is
\[
A_i=[0,\bar p],
\]
where \(p_i\in A_i\) denotes the price chosen by firm \(i\). For a transformation \(T\), the payoff of firm \(i\) is
\[
u_i^T(p_i,p_j,\theta_i)
=
\bigl(p_i-c(\theta_i)+s_i(T)\bigr)
\bigl(a-bp_i+\gamma p_j\bigr),
\]
where \(s_i(T)\) is the subsidy received by firm \(i\).

In the adaptive tax/subsidy example, let
\[
\mathcal{T}=\{T^0,T^1,T^2,T^S\}
\]
denote the set of possible policy regimes, where \(T^0\) is the neutral regime with no subsidy to either firm, \(T^1\) grants a subsidy only to firm \(1\), \(T^2\) grants a subsidy only to firm \(2\), and \(T^S\) grants the same subsidy to both firms. Accordingly, the subsidy component entering firm \(i\)'s payoff is defined directly by
\[
s_i(T^0)=0
\]
for both firms,
\[
(s_1(T^1),s_2(T^1))=(\bar s,0),
\]
\[
(s_1(T^2),s_2(T^2))=(0,\bar s),
\]
and
\[
(s_1(T^S),s_2(T^S))=(\bar s,\bar s).
\]
Thus, under any realized transformation \(T\), each firm's payoff in the induced price-competition game is given by
\[
u_i^T(p_i,p_j,\theta_i)
=
\bigl(p_i-c(\theta_i)+s_i(T)\bigr)D_i(p_i,p_j),
\]
with the policy regime determining whether the firm receives no subsidy or the amount \(\bar s\) before the downstream Bertrand interaction.

Each firm has a meta-action
\[
x_i\in\{0,1\},
\]
\[
x_i\in\{0,1\},
\]
where \(x_i=1\) represents lobbying and \(x_i=0\) represents no lobbying. The environment, interpreted as a regulator, has a move
\[
e\in\{N,R\},
\]
where \(N\) means neutral and \(R\) means responsive. Under \(N\), the regulator always selects \(T^0\). Under \(R\), the implemented transformation depends on the firms' lobbying decisions.

The transformation rule is
\[
\Psi(x_1,x_2,e)
=
\begin{cases}
T^0, & e=N,\\
T^1, & e=R,\ x_1=1,\ x_2=0,\\
T^2, & e=R,\ x_1=0,\ x_2=1,\\
T^S, & e=R,\ x_1=x_2=1,\\
T^0, & \text{otherwise}.
\end{cases}
\]

For each transformation \(T\), assume that the transformed Bayesian game has a unique Bayesian Nash equilibrium \(\sigma(T)\). Let
\[
\bar u_i(T,\theta)
\]
denote the corresponding ex-post equilibrium payoff. The interim meta-payoff of firm \(i\), including lobbying cost \(\kappa_i(\theta_i)\ge 0\), is
\[
V_i(m\mid\theta_i)
=
\sum_{\theta_{-i}}p(\theta_{-i}\mid\theta_i)
\sum_{T\in\mathcal{T}}
\pi(T\mid m,\theta)\bar u_i(T,\theta)
-
\kappa_i(\theta_i)m_i(1\mid\theta_i).
\]

\begin{proposition}
Assume that for every \(\theta_j\),
\[
\bar u_i(T^1,(L,\theta_j))
-
\bar u_i(T^0,(L,\theta_j))
>
\bar u_i(T^1,(H,\theta_j))
-
\bar u_i(T^0,(H,\theta_j)),
\]
and that
\[
\kappa_i(L)\le \kappa_i(H).
\]
Then in any meta-Bayesian Nash equilibrium,
\[
m_i(1\mid L)\ge m_i(1\mid H).
\]
\end{proposition}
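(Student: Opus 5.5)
The plan is a single-crossing (monotone comparative statics) argument on the \emph{marginal value of lobbying}. Fix a meta-Bayesian Nash equilibrium \(m^*\) and firm \(i\). By the definition in Section~\ref{sec:equilibrium}, and because \(V_i(\cdot\mid\theta_i)\) is affine in \(\mu=m_i(\cdot\mid\theta_i)\) (the Lemma in the proof of the existence theorem), type \(\theta_i\)'s problem is a linear program over \(\Delta(\{0,1\})\). Define
\[
\Delta_i(\theta_i):=V_i\bigl((\delta_1 \text{ at } \theta_i, m^*_{-i},m^*_0)\mid\theta_i\bigr)-V_i\bigl((\delta_0 \text{ at } \theta_i, m^*_{-i},m^*_0)\mid\theta_i\bigr).
\]
Optimality then reads: \(m_i^*(1\mid\theta_i)=1\) if \(\Delta_i(\theta_i)>0\), \(m_i^*(1\mid\theta_i)=0\) if \(\Delta_i(\theta_i)<0\), and anything is allowed if \(\Delta_i(\theta_i)=0\). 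It therefore suffices to show \(\Delta_i(L)>\Delta_i(H)\). Then \(m_i^*(1\mid H)>0\) forces \(\Delta_i(H)\ge 0\), hence \(\Delta_i(L)>0\), hence \(m_i^*(1\mid L)=1\), which gives \(m_i^*(1\mid L)\ge m_i^*(1\mid H)\).

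\textbf{Step 1 (compute \(\Delta_i\)).} Take \(i=1\); the case \(i=2\) is symmetric. Use the transformation rule \(\Psi\) and the independent prior, so that \(p(\theta_j\mid\theta_1)=p(\theta_j)\) is the same for \(\theta_1=L\) and \(\theta_1=H\). Under \(e=N\) the transformation is \(T^0\) irrespective of \(x_1\), so only \(e=R\) contributes. Under \(e=R\), switching \(x_1\) from \(0\) to \(1\) changes \(T^0\) to \(T^1\) when \(x_2=0\), and changes \(T^2\) to \(T^S\) when \(x_2=1\). This gives
\[
\Delta_1(\theta_1)=m_0^*(R)\sum_{\theta_2}p(\theta_2)\Bigl[(1-m_2^*(1\mid\theta_2))D^{1,0}(\theta_1,\theta_2)+m_2^*(1\mid\theta_2)D^{S,2}(\theta_1,\theta_2)\Bigr]-\kappa_1(\theta_1),
\]
where \(D^{1,0}(\theta)=\bar u_1(T^1,\theta)-\bar u_1(T^0,\theta)\) and \(D^{S,2}(\theta)=\bar u_1(T^S,\theta)-\bar u_1(T^2,\theta)\).

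\textbf{Step 2 (compare types).} The weights \(m_0^*(R)\), \(p(\theta_2)\) and \(m_2^*(1\mid\theta_2)\) do not depend on \(\theta_1\). Hence \(\Delta_1(L)-\Delta_1(H)\) is a nonnegative combination of the differences \(D^{1,0}(L,\theta_2)-D^{1,0}(H,\theta_2)\) and \(D^{S,2}(L,\theta_2)-D^{S,2}(H,\theta_2)\), plus the term \(\kappa_1(H)-\kappa_1(L)\ge 0\). The stated hypothesis makes the first family of differences strictly positive.

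\textbf{Main obstacle.} The hypothesis as written controls only the \(T^1\)-versus-\(T^0\) increment. It says nothing about the \(T^S\)-versus-\(T^2\) increment, which matters whenever the rival lobbies with positive probability. Moreover, strictness is lost when \(m_0^*(R)=0\), or when \(m_2^*(1\mid\cdot)\equiv 1\), and \(\kappa_1(L)=\kappa_1(H)\). In that case both types are indifferent and any mixing, including \(m_1^*(1\mid H)>m_1^*(1\mid L)\), is an equilibrium, so the claim fails as literally stated. I would therefore prove the proposition under two added conditions. The first is the analogous increasing-difference condition \(D^{S,2}(L,\theta_j)\ge D^{S,2}(H,\theta_j)\) for all \(\theta_j\); this is natural, since in the Bertrand example a subsidy is worth more to a low-cost firm that sells more. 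The second is a nondegeneracy condition, either \(m_0^*(R)>0\) together with \(m_j^*(1\mid\theta_j)<1\) for some \(\theta_j\), or \(\kappa_i(L)<\kappa_i(H)\). Alternatively, I would state the conclusion in its weak form: there is always \emph{some} best reply with \(m_i(1\mid L)\ge m_i(1\mid H)\), and every equilibrium has this property when \(\Delta_i(L)>\Delta_i(H)\). With these in place, Step 2 yields \(\Delta_i(L)>\Delta_i(H)\) and the concluding argument of the first paragraph applies.
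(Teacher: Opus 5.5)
Your route is the paper's own. Its sketch likewise defines $\Delta_i(\theta_i)$ as the gain from switching $x_i$ from $0$ to $1$, uses the bang-bang form of type-wise best replies, and compares $L$ with $H$.

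Your two objections are correct, and they are exactly the two steps the sketch leaves unjustified. First, the sketch derives $\Delta_i(L)\ge\Delta_i(H)$ from the $T^1$-versus-$T^0$ hypothesis alone, on the grounds that lobbying ``makes either $T^1$ or $T^S$ more likely''. Your Step~1 shows that the increment $D^{S,2}$ enters with weight $m_0^*(R)\,p(\theta_j)\,m_j^*(1\mid\theta_j)$, and nothing in the statement constrains it. Second, the sketch passes from weak monotonicity of $\Delta_i$ to $m_i(1\mid L)\ge m_i(1\mid H)$. That inference fails in the very case the sketch itself mentions, $\Delta_i(L)=\Delta_i(H)=0$. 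So the proposition is false as written, and your added increasing-difference condition plus nondegeneracy is the appropriate repair.

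A few refinements.

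\textbf{The counterexample.} Your sentence ``in that case both types are indifferent'' needs the common value of $\Delta_1$ to be zero. With $m_0^*(R)=0$ and a common cost $\kappa>0$, you get $\Delta_1\equiv-\kappa$, so both types strictly abstain and the conclusion holds. A clean counterexample uses the fact that the example leaves $W$ unrestricted. Let $W$ make $N$ strictly dominant and set $\kappa_1\equiv\kappa_2\equiv 0$. Then every profile $(m_1,m_2,\delta_N)$, with $\delta_N$ the point mass on $N$, is a meta-Bayesian Nash equilibrium. This includes one with $m_1(1\mid H)=1>0=m_1(1\mid L)$.

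\textbf{Nondegeneracy.} Your condition also needs some $\theta_j$ with $m_j^*(1\mid\theta_j)<1$ to carry positive prior weight. This is automatic if $0<q<1$.

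\textbf{The case $i=2$.} ``Symmetric'' requires re-indexing. Firm~2's marginal value involves $\bar u_2(T^2,\cdot)-\bar u_2(T^0,\cdot)$ and $\bar u_2(T^S,\cdot)-\bar u_2(T^1,\cdot)$. The hypothesis as literally written compares $\bar u_2(T^1,\cdot)$ with $\bar u_2(T^0,\cdot)$, a difference that does not enter $\Delta_2$ at all.

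\textbf{The weak form.} Your weak form (``some best reply is monotone'') is not an equilibrium statement, but it can be upgraded. Under your first added condition, $\Delta_i(L)\ge\Delta_i(H)$ holds at every profile. Hence $B(m)$ always meets the closed convex set $M^{\uparrow}=\{m'\in M: m_i'(1\mid L)\ge m_i'(1\mid H)\text{ for all } i\}$. Applying Kakutani to $m\mapsto B(m)\cap M^{\uparrow}$ on $M^{\uparrow}$ then yields a monotone meta-Bayesian Nash equilibrium.
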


\begin{proof}[Proof sketch]
Let
\[
\Delta_i(\theta_i)
\]
be the expected gain from switching \(x_i\) from \(0\) to \(1\), holding the other strategies fixed. Because the transformation rule \(\Psi\) makes either \(T^1\) or \(T^S\) more likely when \(x_i=1\), the payoff-difference assumption implies
\[
\Delta_i(L)\ge \Delta_i(H).
\]
In equilibrium,
\[
m_i(1\mid\theta_i)=1
\]
if
\[
\Delta_i(\theta_i)>0,
\]
while
\[
m_i(1\mid\theta_i)=0
\]
if
\[
\Delta_i(\theta_i)<0.
\]
If
\[
\Delta_i(\theta_i)=0,
\]
then any mixed value is optimal. Therefore, the monotonicity of \(\Delta_i\) implies
\[
m_i(1\mid L)\ge m_i(1\mid H).
\]
\end{proof}

\subsection{Cybersecurity Protocol Selection with Private Vulnerabilities}

Two players must later choose a protocol
\[
a_i\in\{A,B\}.
\]
Each player has a private type
\[
\theta_i\in\{S,W\},
\]
where \(S\) denotes a secure type and \(W\) denotes a weak type. Under transformation \(T\), the payoff of player \(i\) is
\[
u_i^T(a_i,a_j,\theta_i)
=
\begin{cases}
b_T-c_T(a_i,\theta_i), & a_i=a_j,\\
-\ell_T(\theta_i)-c_T(a_i,\theta_i), & a_i\neq a_j,
\end{cases}
\]
where
\[
\ell_T(W)>\ell_T(S).
\]

Let
\[
\mathcal{T}
=
\{T^{\mathrm{open}},T^{\mathrm{strict}},T^{\mathrm{legacy}}\}.
\]
The meta-actions are
\[
x_i\in\{o,s\},
\]
where \(o\) denotes support for an open protocol and \(s\) denotes support for a strict protocol. The environment has moves
\[
e\in\{\mathrm{statusquo},\mathrm{reform}\}.
\]
The transformation rule is analogous to the previous example.

To fit the uniqueness assumption, we suppose that the protocol game is perturbed, for example by a small noise term, so that each transformed game
\[
T(G_0)
\]
has a unique Bayesian Nash equilibrium. Since weak types face larger mismatch losses, they have a stronger preference for the strict protocol. Therefore, in equilibrium,
\[
m_i(s\mid W)>m_i(s\mid S).
\]

\subsection{Platform Rule Formation with Private Seller Quality}

There are two sellers,
\[
i=1,2,
\]
with private quality
\[
\theta_i\in\{q_L,q_H\},
\qquad
0<q_L<q_H.
\]
After a rule \(T\) is selected, each seller chooses entry and price:
\[
A_i=\{\mathrm{enter},\mathrm{stay\ out}\}\times[0,\bar p].
\]

If seller \(i\) enters and sets price \(p_i\), then the payoff is
\[
u_i^T(a,\theta)
=
\mathbf{1}_{\{\mathrm{entry}\}}
\left[
(1-r_T)p_iD_i^T(p,\theta)-f_T(\theta_i)
\right],
\]
where
\[
D_i^T(p,\theta)
=
\delta_T(\theta_i)-\beta p_i+\gamma p_j,
\]
and
\[
\delta_T(q_H)>\delta_T(q_L).
\]

Let
\[
\mathcal{T}
=
\{T^L,T^S,T^Q\},
\]
where \(T^L\) denotes a lax rule, \(T^S\) denotes a strict rule, and \(T^Q\) denotes a quality-based rule. The meta-actions are
\[
x_i\in\{\mathrm{Lax},\mathrm{Strict}\}.
\]
The environment has moves
\[
e\in\{\mathrm{Growth},\mathrm{Quality}\}.
\]

The transformation rule is as follows:
\[
\Psi(x_1,x_2,e)
=
\begin{cases}
T^L, & e=\mathrm{Growth}
\text{ and } x_1=x_2=\mathrm{Lax},\\
T^S, & e=\mathrm{Quality}
\text{ and } x_1=x_2=\mathrm{Strict},\\
T^Q, & \text{otherwise}.
\end{cases}
\]

Assume that each transformed game
\[
T(G_0)
\]
has a unique Bayesian Nash equilibrium. Since high-quality sellers benefit more from strict rules, their equilibrium support for strict governance is stronger. Hence,
\[
m_i(\mathrm{Strict}\mid q_H)
>
m_i(\mathrm{Strict}\mid q_L).
\]

\section{Conclusion}
\label{sec:conclusion}

We have extended the meta-Nash equilibrium framework introduced by \citet{EshaghiBagha2026} to environments with incomplete information. The proposed notion of meta-Bayesian Nash equilibrium captures situations in which players strategically influence not only actions within a Bayesian game, but also the determination of which Bayesian game is ultimately played. In this setting, private information affects both meta-level strategic choices and equilibrium behaviour inside transformed games.

Under finite type spaces, finite meta-action sets, and a finite family of payoff-preserving transformations, we established the existence of meta-Bayesian Nash equilibrium using Kakutani's fixed-point theorem \citep{Kakutani1941}. The proof relied on continuity, convexity, and upper hemicontinuity properties of the induced best-response correspondence on the space of mixed meta-strategies.

The framework contains several important benchmark models as special cases. When the transformation set
\[
\mathcal{T}
\]
is a singleton, the model reduces to the classical Bayesian Nash equilibrium framework. When type spaces are degenerate, the model collapses to the deterministic meta-Nash equilibrium framework developed in \citet{EshaghiBagha2026}.

The examples of adaptive subsidy competition, cybersecurity protocol selection, and platform rule formation demonstrate that endogenous transformation of games under incomplete information naturally arises in economics, regulation, institutional design, and digital governance. In all these settings, private information influences not only equilibrium actions, but also strategic preferences over institutional structures themselves.

Several extensions remain open for future research. One direction is to relax the uniqueness assumption on within-game Bayesian Nash equilibria by introducing equilibrium selection correspondences, correlated equilibrium selections, or stochastic refinement mechanisms. Another direction is to allow transformations that modify not only payoff functions, but also action sets, type spaces, timing structures, or information structures. Dynamic and repeated versions of meta-Bayesian games may also provide a foundation for studying long-run institutional evolution and strategic adaptation under uncertainty.

\section*{Data Availability Statement}

No datasets were generated or analysed during the current study.

\section*{Conflict of Interest}

The authors declare that they have no conflict of interest.

\bibliographystyle{plainnat}
\bibliography{references}

\end{document}